\theoremstyle{plain}
\newtheorem{lemma}{Lemma}
\newtheorem{proposition}{Proposition}
\theoremstyle{definition}
\newtheorem{definition}{Definition}
\newtheorem{remark}{Remark}
\theoremstyle{remark}
\journal{Mathematical Biosciences}
\begin{document}

\begin{frontmatter}



\title{Topological language for RNA}


\author[vt]{Fenix W.D.\ Huang}
\ead{fenixprotoss@gmail.com}

\author[vt]{Christian M.\ Reidys\corref{cor1}}
\ead{duck@santafe.edu}

\cortext[cor1]{Corresponding author}

\address[vt]{Biocomplexity Institute of Virginia Tech, Virginia Tech,
United States of America}

\begin{abstract}
In this paper we introduce a novel, context-free grammar, {\it RNAFeatures$^*$},
capable of generating any RNA structure including pseudoknot structures (pk-structure).
We represent pk-structures as orientable fatgraphs, which naturally leads to a
filtration by their topological genus. Within this framework, RNA secondary structures
correspond to pk-structures of genus zero.
{\it RNAFeatures$^*$} acts on formal, arc-labeled RNA secondary structures, called
$\lambda$-structures. $\lambda$-structures correspond one-to-one to pk-structures
together with some additional information. This information consists of the
specific rearrangement of the backbone, by which a pk-structure can be made
cross-free.
{\it RNAFeatures$^*$} is an extension of the grammar for secondary
structures and employs an enhancement by labelings of the symbols as
well as the production rules. We discuss how to use {\it RNAFeatures$^*$} to
obtain a stochastic context-free grammar for pk-structures, using data of RNA
sequences and structures. The induced grammar facilitates fast Boltzmann sampling
and statistical analysis. As a first application, we present an $O(n log(n))$ runtime
algorithm which samples pk-structures based on ninety tRNA sequences and structures
from the Nucleic Acid Database (NDB).
\end{abstract}

\begin{keyword}


RNA pseudoknot structure \sep context-free grammar \sep topological RNA structure \sep
fatgraph \sep stochastic context-free grammar
\end{keyword}

\end{frontmatter}


\section{Introduction} \label{S:Int}

RNA secondary structure provides a coarse grained model to study important
features of the real molecular conformation, embedded in three-space \cite{Uhlenbeck:98}. The key
feature of these structures is that they can be inductively constructed \cite{Stein:78a}.
The recursion forms the basis for more than three decades of research resulting in
what can be called the dynamic programming (DP) paradigm. DP methodology facilitates
{\it ab initio} folding \cite{Waterman:86, Zuker:81,Hofacker:94a}, partition function 
\cite{McCaskill:90}, Boltzmann sampling \cite{SecondaryStructureSampling} 
and context-free grammars (CFGs) \cite{Harrison:78}.
Minimum free energy (mfe) secondary structures can be folded in $O(n^2)$ space and
$O(N^3)$ time complexity \cite{Waterman:86} and their partition function is computed in $O(n^3)$ time
\cite{McCaskill:90}. Stochastic folding is based on their CFG, allowing to factor in specific
properties of biological structures. Thermodynamic and stochastic secondary structures have
been statistically analyzed in \cite{Nebel:11, Fontana:93, Hofacker:94a}.
In \cite{Rivas:99} the secondary structure paradigm has been extended to include gap-structures.
Gap-structures include certain, but not all pseudoknot structures (pk-structures). They can be
viewed as the most general class to which the DP-paradigm directly applies and their underlying
grammar is multiple context-free. The approach allowed folding of gap-structures in $O(n^6)$
time and $O(n^4)$ space complexity and sampling in $O(n^3)$ time.

In this paper we approach the problem of cross-serial interactions of pk-structures from first
principles. We employ an abstraction, that is well known to structural biology \cite{Leontis:01}: namely,
a nucleotide is not a dimensionless object but a triangle and base pairs can be abstracted as
untwisted or twisted ribbons connecting two such triangles, see Fig.~\ref{F:basepair}.
This observation introduces topology as a natural language for biological structures and leads to
the fatgraph model \cite{Penner:10}.
Our main result is to map a novel recursion on unicellular maps \cite{Chapuy:11} into the classic
recursion derived in \cite{Stein:78a}. I.e.~we translate a recursive procedure on fatgraphs into
an enhanced version of the recursion of secondary structures. This results in the first,
unambiguous, context-free grammar for pk-structures, {\it RNAFeatures$^*$}. We present as a
first application an $O(n \log(n))$ Boltzmann sampler and stochastic context-free grammar.

In the following we discuss the fatgraph modeling Ansatz in detail. We then provide some
background on computational linguistics in the context of RNA structures, enabling us to
discuss what we mean by topological language of RNA. 

The biophysics of the canonical base pairs {\bf A-U}, {\bf G-C} and {\bf G-U} implies, that
the fatgraphs modeling RNA structures are orientable: RNA purine and pyrimidine bases are
modeled as three edges for hydrogen bonding interactions: the Watson-Crick edge, the Hoogsteen
edge, and the Sugar edge\footnote{which includes the 29-OH and which has also been referred to as the
Shallowgroove edge} \cite{Leontis:01}. 
A given edge of one base can potentially interact in a plane with any one of the three
edges of a second base, and can do so in either cis- or trans- orientation of the glycosidic bonds.
This gives rise to 12 basic geometric types with at least two H bonds connecting the bases,
see Fig.~\ref{F:basepair}. Accordingly, cis-base pairs, including the 
canonical base pairs belong to the {\it cis-Watson-Crick/Watson-Crick} geometry.

Canonical base pairs induce exclusively untwisted ribbons, modeled via orientable \emph{fatgraphs},
i.e.~graphs enhanced by replacing vertices by discs and edges by ribbons. Gluing the sides of these
ribbons produces the topological quotient space \cite{Massey:69}, which is a connected sum of tori.
As topological genus, $g$, and orientability are the two determinants of closed surfaces, RNA structures
having canonical base pairs are filtered by genus alone. RNA secondary structures \cite{Waterman:78aa},
are exactly structures of genus zero. Structures of genus $g>0$ are those exhibiting cross-serial
interactions. In a sense, topological genus measures the complexity of a biomolecule, far more subtle
than the number of crossing arcs.
One may think of pk-structures as {\it cross-free} drawings on a closed surface of genus $g\ge 0$,
instead of a drawing with crossing arcs in the plane\footnote{A cross-free drawing in the plane
corresponds to an RNA secondary structure.}, see Fig.~\ref{F:pse}. 


The term \emph{language of RNA} has been used in \cite{Rivas:00}, where the authors
study the aforementioned gap-structures \cite{Rivas:99}.
Methods from computational linguistics have been applied to problems
in DNA and RNA sequence analysis for decades: early work involved using regular grammars
and hidden Markov models (HMMs) in order to model biological sequences. 
These have been used in sequence analysis \cite{Pachte:05}, such as identifying CpG
islands \cite{Durbin:98}, gene prediction \cite{Munch:06}, pairwise and multiple sequence
alignment \cite{Durbin:98,Pachter:02}, modeling DNA sequencing errors \cite{Lottaz:03},
protein secondary structure prediction, and RNA structural alignment \cite{Yoon:08}.
Regular grammars and HMMs are not well-equipped to model problems in RNA folding
because they cannot describe the long-distance correlations of base pairs. Instead, a
larger class called \emph{context-free} grammars (CFG) are used.
Assigning probabilities to the production rules of CFG produces a stochastic 
context-free grammar (SCFG), which, in the context of HMM, allows each state to
generate not a single, but any (fixed) number of immediate successors.

By a \emph{topological language of RNA} we mean an unambiguous, CFG that allows us to
recursively construct {\it any} pk-structure using recursions that are induced by cell-surgery
on fatgraphs. The latter shall be translated into an enhancement of the classical recursion \cite{Stein:78a}.
This allows us to connect to the existing paradigm by presenting a CFG based on arc-removals and
decomposition as production rules\footnote{as for  CFG of secondary structures}, 
where the cross-serial interactions are encapsulated in the labels of the arcs and nonterminal symbols,
respectively.

Finally we discuss the above mentioned recursion of secondary structures and provide some background
on topological RNA structures.

Waterman {\it et al.} \cite{Waterman:78a, Stein:78a,Nussinov:78,Kleitman:70}
studied the combinatorics and folding of RNA secondary structures. Their diagrams are
labeled graphs over the vertex set $[n]=\{1, \dots, n\}$, presented by drawing the
vertices on a horizontal line and noncrossing arcs in the upper half-plane.
Vertices and arcs correspond to the nucleotides {\bf A}, {\bf G}, {\bf U} and {\bf C}
and Watson-Crick ({\bf A-U}, {\bf G-C}) and wobble ({\bf U-G}) base pairs, respectively.
The noncrossing arcs of RNA secondary structures allow for a recursive build.
Let $S_2(n)$ denotes the number of RNA secondary structures over $n$ nucleotides
then we have \cite{Waterman:78a}: $S_2(n)=S_2(n-1)+
\sum_{j=0}^{n-3}S_2(n-2-j)S_2(j)$, where $S_2(n)=1$ for $0\le n\le 2$. Accordingly,
RNA secondary structures satisfy a constructive recursion. This relation constitutes the basis
for the DP-recursions used for the polynomial time folding \cite{Waterman:78aa, Zuker:81} and
has profound algorithmic implications. The DP-framework has ever since strongly influenced the field
of RNA folding \cite{Waterman:78a, Zuker:81, Sankoff:92, Hofacker:94a}. 

CFGs of secondary structures \cite{Zuker:81,Hofacker:94a, Knudsen:99} are utilizing the recursive
nature of secondary structures \cite{Waterman:78a}. The grammar has two production rules, displayed in
Fig.~\ref{F:grammar}. Its terminal symbols denote a pair of vertices forming an arc as well
as an unpaired vertex and the nonterminal symbols are an arbitrary secondary structure and 
an irreducible structure covered by an external arc.
The grammar is unambiguous, i.e.~each structure has a unique decomposition path.

Cross-serial interactions or pseudoknots have long been known as important structural
elements \cite{Westhof:92a, Chen:05pnas} in RNA. Depicting a contact structure as a diagram,
a structure contains a pseudoknot if and only if these arcs cross. Cross-serial interactions
are functionally important in tRNAs, RNaseP \cite{Loria:96a}, telomerase RNA \cite{Staple:05},
and ribosomal RNAs \cite{Konings:95a}. Pseudoknots in plant virus RNAs mimic tRNA structures,
and {\it in vitro} selection experiments have produced pseudoknotted RNA families that bind
to the HIV-1 reverse transcriptase \cite{Tuerk:92}. 

Topological RNA structures have been introduced in \cite{Waterman:93,Penner:03} and
the classification and expansion of pk-structures in terms
of the topological genus of an associated fatgraph has been studied by means of
matrix theory in \cite{Orland:02,Bon:08}. 
The computation of the genus of a fatgraph is classic \cite{Euler:1752} and was
first applied to RNA structures by \cite{Orland:02} and \cite{Bon:08}.
\cite{Reidys:top1} studies topological RNA structures of higher genus and connects them
with Riemann's Moduli space \cite{Penner:03}. In \cite{Reidys:11a} a polynomial time,
loop-based folding algorithm of topological RNA structures was given.

The paper is organized as follows: in Section~\ref{S:fat} we discuss the idea behind
the novel recursions. Thereafter we establish the correspondence between pk-structures and blueprints
with $\lambda$-structures in Section~\ref{S:bi}. 
Then we discuss {\it RNAFeatures$^*$} and finally detail the SCFG in Section~\ref{S:grammar}.


\section{Fatgraphs and blueprints.}\label{S:fat}

The contacts of nucleotides within an RNA structure can be represented as a diagram
whose arcs can be linearly ordered via their left end-points, i.e.~$(i,j)\prec (r,s)$
if and only if $i<r$. Suppose $(i,j) \prec (r,s)$, then $(i,j)$ and $(r,s)$ cross iff 
$i<r<j<s$ holds. 

In the following we shall assume that any diagram contains the arc $(0,n+1)$ (rainbow),
which is not accounted for as an edge of the diagram. Rainbows mark the start and endpoint
of the $5'$-$3'$ oriented backbone, thus allowing to collapse the backbone into a disc without loosing any information. 

The passage from diagrams to fatgraphs \cite{Loebl:08,Penner:10} is obtained by
``thickening'' the edges into (untwisted) bands or ribbons. Furthermore, each vertex
is inflated into a disc as shown in Fig.~\ref{F:fat}. This inflation reflects the fact
that canonical base pairs fix a plane, see Fig.~\ref{F:basepair}.
A fatgraph, $\mathbb{G}$, can thus be viewed as a ``drawing'' on an orientable surface
$X_{\mathbb{G}}$, which is obtained by identifying the sides of the ribbons. $\mathbb{G}$ is
a $2$-dimensional cell complex over its geometric realization, $X_{\mathbb{G}}$.



 
The Euler characteristic and topological genus 
of the surface $\mathbb{D}$ are given by  
$\chi(\mathbb{D})  =  v - e + r$ and $g(\mathbb{D}) =  1-\frac{1}{2}\chi(\mathbb{D})$,
where $v$, $e$ and $r$ denote the number of discs, ribbons and boundary components
in $\mathbb{D}$ \cite{Massey:69}. 
The equivalence of simplicial and singular homology \cite{Hatcher:02} implies that
these combinatorial invariants are topological invariants. This means the genus of
the surface $X_{\mathbb{D}}$ provides a filtration of fatgraphs. 

A fatgraph can be presented by a pair of permutations \cite{Penner:10}. 
Let $H= [2n]$ denote the set of half-edges, and let $\sigma$, $\alpha$ and $\gamma$ be 
three permutations over $H$, where each cycle in $\sigma$, $\alpha$ and $\gamma$
presents a vertex, an edge and a boundary component, respectively.
A vertex of degree $k$ is considered as a cycle $v=(i,\sigma(i),\dots,\sigma^{k-1}(i))$.
In particular, $\alpha$ is a fixed-point free involution since an edge consist of two 
half-edges and we have $\gamma = \alpha \circ \sigma$.

Removing the unpaired vertices from the diagram and collapsing the backbone into a disc
does not change the Euler characteristic. Therefore, the relation between genus and number
of boundary components is solely determined by the number of arcs in the upper half-plane:
$2-2g-r = 1-n$, where $n$ is number of arcs and $r$ the number of boundary components. The latter
can be computed easily and allows us to obtain the genus of the diagram.

Mapping $(H,\sigma, \alpha)$ into $\pi((H,\sigma, \alpha))=(H, \alpha\circ \sigma,\alpha)$,
is a bijection, see Fig.~\ref{F:dual}. $\pi$ is called the {\it Poincar\'{e} dual} and
interchanges boundary components with vertices, preserving topological genus. By construction
we shall deal only with fatgraphs over one backbone whose duals thus have one boundary
component. We refer to these duals as {\it unicellular maps}.


A bijection is derived by Chapuy in \cite{Chapuy:11} between a unicellular map of 
genus $g$ together with a distinguished trisection and a unicellular map of genus $g-k$,
together with $2k+1$ labeled vertices. This is facilitated by successively slicing a vertex into three new
vertices via the distinguished trisection. Since the trisection can persist through finitely
many slicings, say $k$ times, the process produces $2k+1$ labeled vertices, reducing the genus
of the map by exactly $k$. Gluing is the inverse to slicing: gluing a set of $2k+1$ labeled
vertices in a unicellular map increases the genus of the map by $k$. By construction, both,
slicing and gluing processes preserve unicellularity.

In a unicellular map we have two orders, $<_\gamma$ and $<_\sigma$. The former accounts for
the length of the tour of the boundary component and the latter is induced by counterclockwise
rotation around a vertex, i.e.~from $i$ to $\sigma(i)$. A half-edge $\tau$ is called a {\it trisection} if
$\tau\le \sigma^{-1}(\tau)$ and $\tau$ is not the minimum half-edge of the vertex. A
trisection marks the nontrivial order discrepancies between $\sigma$ and $\gamma$. It is
topologically relevant, since the number of trisections in any unicellular map having genus $g$
equals exactly $2g$ \cite{Chapuy:11}. In particular, any unicellular map having genus
$0$, i.e.~a planar tree, has no trisections.

Let $\bar v$ be a vertex with trisection $\tau$. We slice as follows \cite{Chapuy:11}: 
let $a_1$ denote the minimum half-edge of $\bar v$, let $a_2$ denote the minimum half-edge
between $a_1$ and $a_3=\tau$ (rotating clockwise), that is larger than $a_3$ with respect to
$<_\gamma$. 
Iterating this means to further slice $v_3$ until $\tau$ becomes its minimum half-edge
and we derive the mapping $\Xi( \mathfrak{m}_g, \tau )=(\mathfrak{m}_{g-k},v_1, \dots, v_{2k+1})$,
where $k$ denotes the number of slicings. The new vertices $v_1, \dots, v_{2k+1}$ are by construction
ordered by their minimum half-edges with respect to $<_\gamma$. Furthermore, $\Xi$ is bijective.

The process can be reversed, i.e.,  
$\Lambda(\mathfrak{m}_{g-k}, v_1, \dots, v_{2k+1})=(\mathfrak{m}_{g},\tau)$, derived
by successively gluing $v_1, \dots, v_{2k+1}$ in the following fashion: we first glue the last
three vertices $v_{2k-1}$, $v_{2k}$ and $v_{2k+1}$ obtaining a new vertex, $w$, with an (intermediate)
trisection. Continuing this, by gluing the next two vertices with $w$, produces a unicellular
map $\mathfrak{m}_{g}$ together with the trisection $\tau$. 
Iterated slicings and selecting new trisections as needed, generates a unicellular
map having genus $0$, i.e.~a planar tree.
\begin{definition}{\bf (blueprint)}
  Suppose $\mathfrak{m}_g$ is a unicellular map of genus $g$ having $m$ edges together with
  the trisection $\tau$.
A $(\mathfrak{m}_g,\tau)$-blueprint is a sequence 
$$
( (\mathfrak{m}_{g}, \tau), (\mathfrak{m}_{g_1}, \tau_1,), \ldots,
(\mathfrak{m}_{g_{r-1}}, \tau_{r-1}), (\mathfrak{m}_{0},\varnothing)),
$$
where $(\mathfrak{m}_{g_{i+1}},V_{i+1})  = \Xi(\mathfrak{m}_{g_i},\tau_i)$,
for $0\le i <r$ and $g_r=0,\tau_r=\varnothing$.  
\end{definition}
\begin{remark}
  {\rm
    In the following we shall refer to the blueprint of the dual of a pk-structure simply as the
blueprint of the pk-structure. In Fig.~\ref{F:path} we display the blueprints of a matching
having genus $2$.} 
 \end{remark}
  
\begin{remark}
{\rm Note that for two unicellular maps having genus $g$, the number of blueprints is not
necessarily equal. Furthermore, all blueprints can be constructed by considering all
trisections at each step, respectively. The cost of finding all blueprints is exponential
in $g$, however, the genus of RNA structures found in databases is typically less
than $3$.}
\end{remark}


\section{$\lambda$-structures}\label{S:bi}

We establish now a bijection between unicellular maps $\mathfrak{m}_{g}$ together with
blueprints and a particular class of labeled, planar trees. The labels will
allow to recover the blueprints. Note that, since some vertices may have been involved
in subsequent slicing events, they are not present in the tree.

\begin{definition}{\bf ($\lambda$-tree)}
  A $\lambda$-tree, $\mathfrak{m}_0^{(\sigma_v)_v}$ is a rooted, planar tree with boundary component
  $\gamma$, in which a vertex $v$ carries the label $\sigma_v\in \mathbb{F}_2^r$ such that\\
  $\bullet$ $\sum_{v,h} \sigma_v|_h = 2g+r$, where $1\le r\le g$,\\
  $\bullet$ $|\{\sigma_v \mid \sigma_v|_h=1 \}|\equiv 1 \mod 2$,\\
  $\bullet$ $\vert \{ v_{\sigma_v} \mid \sigma_v|_h=\sigma_v|_i=1, \; h<i\}\vert \le 1$
  and any vertex, whose label $\sigma_v$ satisfies $\sigma_v|_h=\sigma_v|_i=1$
  for $h<i$, is minimal in $\gamma$ within the set of all the vertices, having label $\sigma_v$ where
  $\sigma_v|_i=1$.
\end{definition}
We shall call the minimum vertex $v$ with the property $\sigma_v|_i=1$ transitional of level $i$
and regular, otherwise. Note that a vertex can be transitional for multiple indices. 
\begin{lemma}\label{L:welldefined}
  Any unicellular map, $\mathfrak{m}_g$, together with a blueprint, $\mathfrak{p}$, induces a
  unique $\lambda$-tree, $\mathfrak{m}_0^{(\sigma_v)_v}$.
\end{lemma}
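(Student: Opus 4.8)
The plan is to construct the $\lambda$-tree explicitly from the blueprint and then verify it satisfies the three defining conditions, with uniqueness following from the deterministic nature of the construction. The starting point is the blueprint $\mathfrak{p} = ((\mathfrak{m}_g,\tau),(\mathfrak{m}_{g_1},\tau_1),\ldots,(\mathfrak{m}_0,\varnothing))$, which by definition records a chain of slicing maps $\Xi$ terminating in the planar tree $\mathfrak{m}_0$. Since $\Xi$ is bijective (as stated earlier) and each application $\Xi(\mathfrak{m}_{g_i},\tau_i)$ produces a set $V_{i+1}$ of $2k_i+1$ labeled vertices ordered by their minimum half-edges with respect to $<_\gamma$, the underlying planar tree $\mathfrak{m}_0$ is already determined; what remains is to define, for each vertex $v$ of $\mathfrak{m}_0$, the label $\sigma_v\in\mathbb{F}_2^r$ that encodes at which levels of the blueprint $v$ participated in a gluing.

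First I would define the labels: for a vertex $v$ surviving in $\mathfrak{m}_0$, set $\sigma_v|_i = 1$ precisely when $v$ is one of the $2k_i+1$ vertices in $V_i$ produced at step $i$ of the blueprint (equivalently, one of the vertices glued when inverting that step via $\Lambda$), and $0$ otherwise. The length $r$ of the label vector is the number of slicing steps in the blueprint. With this definition in hand, the three conditions become bookkeeping statements about $\Xi$ and $\Lambda$. The parity condition $|\{\sigma_v \mid \sigma_v|_h = 1\}| \equiv 1 \bmod 2$ is immediate because each slicing step produces an odd number $2k+1$ of vertices, so the count of vertices carrying a $1$ in any fixed coordinate is odd. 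The sum condition $\sum_{v,h}\sigma_v|_h = 2g+r$ should follow by summing the per-step vertex counts: if step $i$ reduces the genus by $k_i$, then $\sum_i k_i = g$ and $\sum_i (2k_i+1) = 2g+r$, which is exactly the total number of incidences $(v,h)$ with a label bit set.

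The third condition is where I expect the real work to lie, since it concerns the interaction between the ordering $<_\gamma$ on the boundary component and which vertices can carry two label bits simultaneously. Here I would argue that a vertex $v$ can have $\sigma_v|_h = \sigma_v|_i = 1$ with $h<i$ only if $v$ persisted as the trisection-carrying vertex across consecutive slicing steps — this is exactly the "iterating" phenomenon where one slices $v_3$ repeatedly until $\tau$ becomes its minimum half-edge. The claim that at most one such vertex exists per coordinate, and that it is minimal in $\gamma$ among the vertices sharing that label coordinate, must be traced back to the precise recipe for selecting the new trisection at each stage (the construction involving $a_1,a_2,a_3=\tau$ and the $<_\gamma$-minimality clause). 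The hard part will be showing that the transitional vertex is genuinely unique and $\gamma$-minimal; I anticipate this requires carefully analyzing how the minimum half-edge of the glued vertex $w$ relates, under $<_\gamma$, to the half-edges of the constituent vertices, and invoking that the new vertices $v_1,\ldots,v_{2k+1}$ are ordered by their $<_\gamma$-minimal half-edges.

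Finally, uniqueness of the induced $\lambda$-tree follows because every choice in the construction is forced: the tree $\mathfrak{m}_0$ is determined by iterating the bijection $\Xi$, and the labels are determined coordinatewise by membership in the vertex sets $V_i$, leaving no free parameters. I would close by noting that well-definedness (the output genuinely satisfies the $\lambda$-tree axioms) together with this forcedness gives the asserted unique induced $\lambda$-tree.
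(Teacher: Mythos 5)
Your construction has a genuine gap, and it sits exactly where the difficulty of this lemma lies. You define the label by literal membership: $\sigma_v|_i=1$ iff $v$ is one of the $2k_i+1$ vertices in $V_i$ produced at step $i$. But a vertex produced at step $i$ need not survive to $\mathfrak{m}_0$: the trisection $\tau_j$ of a later step $j>i$ may lie on it, in which case that vertex is itself sliced and is \emph{not present} in the final tree (the paper points this out explicitly just before the definition of $\lambda$-trees). Under your rule the bit at coordinate $i$ belonging to such a destroyed vertex is simply lost. Consequently your verification of the first two axioms is invalid: the count of $\mathfrak{m}_0$-vertices with a $1$ in coordinate $i$ is then strictly less than $2k_i+1$ (and can be even), so neither $\sum_{v,h}\sigma_v|_h=2g+r$ nor the parity condition $|\{\sigma_v \mid \sigma_v|_h=1\}|\equiv 1 \bmod 2$ follows from ``summing the per-step vertex counts.'' Note also that under your literal definition no vertex of $\mathfrak{m}_0$ can ever carry two bits (the elements of $V_i$ are new vertices, not elements of any earlier $V_h$), so the third axiom would hold vacuously while the first axiom fails --- a sign that the labeling itself, not just the verification, is wrong whenever some vertex is sliced more than once.

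What is missing is the inheritance/reduction mechanism that constitutes the actual content of the paper's proof. There, one first assigns \emph{intermediate} labels in which a vertex of $\mathfrak{m}_0$ gets a $1$ in coordinate $s$ whenever it descends from a vertex involved in the $s$th slicing (so all descendants of a destroyed vertex carry its bit), and then reduces these labels, working from coordinate $r$ down to $2$: every vertex that is \emph{regular} at level $i$ has its coordinates below $i$ zeroed out, so that only the transitional ($\gamma$-minimal) vertex of each slicing retains the information about earlier slicing events. This reduction is precisely what makes each element of $V_h$ contribute exactly one $1$ at coordinate $h$ --- carried either by itself or by the transitional vertex of the step that destroyed it --- yielding $\sum_i(2\Delta g_i+1)=2g+r$, the odd per-coordinate counts, and the uniqueness statement in the third axiom. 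Your third paragraph gestures at $\gamma$-minimality as something to be verified about your labels, but in the paper's argument it is built into the definition of the labels via the reduction; without that step there is nothing correct to verify. Your uniqueness argument (forcedness of the construction) is fine once the labeling is repaired.
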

\begin{proof}
Suppose we are given a unicellular map $\mathfrak{m}_g$ together with a slice-path consisting of
$r$ slicings. $\mathfrak{p}$ generates a planar tree, $\mathfrak{m}_0$, whose vertices we color
red if they were involved in some slicing and black, otherwise. $\mathfrak{p}$ furthermore specifies
the sequence of slicings and each slicing reduces topological genus by at least one, whence $1\le r\le
g$.

We label the black vertices by $(0,\dots,0)\in \mathbb{F}_2^r$ and each red vertex obtains a $1$ in
its $s$th-coordinate if and only if it was involved in the $s$th-slicing. This produces a set of
intermediate labels $(\sigma_v')_v$.

We proceed by reducing the $(\sigma_v')_v$: starting from $i=r$ to $i=2$, we set
 \begin{equation*}
  \sigma_v=
  \begin{cases}
    (0,\dots,x_i=1,x_{i+1},\dots,x_r) \quad  \text{\rm if $v$ is regular at $i$}\\
    \sigma_v \quad \text{\rm otherwise.}
  \end{cases}
 \end{equation*}
  Accordingly, any $\gamma$-minimal vertex of a fixed slicing retains a $1$ iff it came
  from a previously sliced vertex and, as a result, irrespective of how many times the
  vertices were sliced, the total number of $1$ entries in all coordiantes of all lables
  equals:
  \begin{equation*}
    \sum_{\sigma_v} \vert \{h\mid \sigma_v|_h=1\}\vert = \sum_{i=1}^r (2\Delta g_i+1) = 2g+r,
  \end{equation*}
    where $\Delta g_i$ is the decrease of genus in the $i$th slicing. As a result, we have
    $\sum_{v,h} \sigma_v\vert_h = 2g+r$.

    By construction, the total number of $1$ entries found as $s$th-coordinates equals
    the total number of vertices obtained via the $s$th slicing, whence
    $|\{\sigma_v \mid \sigma_v|_s=1 \}|\equiv 1 \mod 2$.

  Finally, consider two fixed indices $h<i$. Only the label of the $\gamma$-minimal vertex
  amongst the vertices of the $i$th slicing can have an entry $1$ as $h$th coordinate, whence
  $\vert \{ v \mid \sigma_v|_h=\sigma_v|_i=1, \; h<i\}\vert \le 1$ and the proof of the lemma
  is complete. In Fig.~\ref{F:remove_label} we illustrate the idea of the proof.
\end{proof}

Having established the mapping from unicellular maps of genus $g$ together with
blueprints into $\lambda$-trees, we proceed by showing that this mapping is bijective. 
\begin{proposition}\label{P:bi}
Suppose $\mathfrak{m}_g$ is a unicellular map with genus $g$ and blueprint $\mathfrak{p}$, 
then we have the bijection
\begin{equation*}
\xi \colon (\mathfrak{m}_g, \mathfrak{p}) \mapsto \mathfrak{m}^{(\sigma_v)_v}_0.
\end{equation*}
\end{proposition}
\begin{proof}
Following the slice-path, $\mathfrak{p}$, the fatgraph $\mathfrak{m}_g$ is sliced to
the tree $\mathfrak{m}_0$ with labels $(\sigma_v)_v$, $\mathfrak{m}^{(\sigma_v)_v}_0$.
By Lemma~\ref{L:welldefined}, $\mathfrak{m}^{(\sigma_v)_v}_0$ is a $*$-tree, whence
$\xi(\mathfrak{m}_g)=\mathfrak{m}^{(\sigma_v)_v}_0$ is well-defined.

To show that $\xi$ is bijective, let $\mathfrak{m}_0^{(\sigma_v)_v}$ be an arbitrary
$*$-tree such that $\sigma_v\in\mathbb{F}_2^r$. We shall construct $\xi^{-1}$ as
follows:

Let $V_r = \{v  \mid \sigma_v|_r = 1\}$ denote the set of vertices whose labels satisfy
$\sigma_v|_r=1$. Lemma \ref{L:welldefined} guarantees that the number of vertices contained
in $V_r$ is odd. Then we glue 
$$
\Lambda\colon (\mathfrak{m}_0, V_r ) \to (\mathfrak{m}_{g_{r-1}}, \tau_{r-1} ), 
$$
creating the trisection, $\tau_{r-1}$, where $g_{r-1} = (|V_r|-1)/2$.

By construction $\mathfrak{m}_{g_{r-1}}$ is a unicellular map with two types of vertices:
those which correspond to $\mathfrak{m}_0$-vertices (type $1$) and a unique new vertex
obtained by gluing the set $V_r$ (type $2$).
We now label the $\mathfrak{m}_{g_{r-1}}$-vertices with labels $\sigma_v^{r-1}$
using the labels $\sigma_v$ of $\mathfrak{m}_0$ as follows:
For any type $1$ vertex we set $\sigma^{r-1}_v|_i=\sigma_v|_i$ for $1\le i\le r-1$
and $\sigma^{r-1}_v|_r=0$. The unique type $2$ vertex obtains the label
$$
\sigma_v^{r-1}|_i=\sum_{v\in V_r}\sigma_v|_i, \quad \forall 1\le i \le r-1, \quad \quad 
\sigma_v^{r-1}|_r = 0. 
$$
In Fig.~\ref{F:blueprint} we illustrate the idea of the proof.
\end{proof}
A $\lambda$-tree corresponds to a unicellular map of genus $g$, together with a blueprint. As a planar
tree it corresponds by Poincar\'{e} duality to a noncrossing matching. 
We call the edge-labeled Poincar\'{e} dual of a $\lambda$-tree a $\lambda$-matching, $M_0^{(\sigma_a)_a}$,
whose edge-labels $(\sigma_a)_a$ are induced by the vertex labels of the $\lambda$-tree $(\sigma_v)$ as
follows: there are $n+1$ vertices and $n$ edges in a planar tree, whence a vertex corresponds
uniquely to an edge (except of the root). To this end, we consider the edge which connects $v$ to
its parent, see Fig.~\ref{F:vtoe}. Note that, by construction, the root of the planar tree is never
involved in slicings and accordingly labeled $0\in\mathbb{F}_2^r$.



A $\lambda$-matching induces an arc-labeled secondary structure by insertion of unpaired vertices.
To facilitate this we employ the index map $I\colon \mathbb{N}_0 \to \mathbb{N}_0$, $i\to j$.
We interpret $I$ as to specify how many unpaired vertices follow the $M_0^{(\sigma_a)_a}$-vertex,
$i$. Starting with the left-endpoint of the rainbow and we have $I^{-1}(j+1) - I^{-1}(j) -1$
such vertices.

\begin{definition}
  A $\lambda$-structure $S^{(\sigma_a)_a}$ is a noncrossing diagram with
  labeled arcs, $\sigma_a$, that when removing all unpaired
  vertices, induces a $\lambda$-matching.  
\end{definition}


Note that slicing induces transpositions of the boundary component. Thus, taking the Poincar\' e
dual, these transpositions permute the backbone of $M_g^{(\sigma_a)_a}$. Let $\rho=\rho(\mathfrak{p})$
denote resulting permutation of the backbone, where $\rho(0)=0$, then $I^{\rho}(h)=I(\rho^{-1}(h))$
is the induced index map for the pk-structure corresponding to the $\lambda$-structure and we have the
following situation, see Fig.~\ref{F:comm1}. 
{\small $$
\diagram
(S_{g,n},\mathfrak{p}) \dline \rline & (M_g,\mathfrak{p},I^\rho) \dline \rline^{\text{dual}} &
(\mathfrak{m}_g,\mathfrak{p},I^\rho) \dline \\
S^{(\sigma_a)_a} & \lline  (M_0^{(\sigma_a)_a},I) & \lline^{\text{dual}}  
(\mathfrak{m}_0^{(\sigma_v)_v},I)
\enddiagram
$$}


\section{The grammar}\label{S:grammar}

We can now derive the CFG for $\lambda$-structures, {\it RNAFeatures$^*$}. The key
issue here will be the compatibility of the arc labels with the production rules. 

Let us briefly review the CFG for secondary structures, implied by the recursion of \cite{Waterman:78a},
i.e.~the tuple $G=(V, \Sigma, R,
I)$, where $V,\Sigma,S$ are sets of nonterminal symbols, terminal symbols and production rules,
respectively. A production rule is a mapping from
$V$ to $(\Sigma\bigcup V)^*$, where the asterisk represents the Kleene star operation
and $S$ is an initial symbol. 
A derivation starts with the initial symbol $I$,  and in each step, replaces one of 
the nonterminal symbols by one of the productions. 
For secondary structures, the grammar consists of 
$V= \{S, L\}$, $\Sigma = \{d,d',s\}$, and the two production rules are
\begin{equation} \label{E:sec_grammar}
S \to \epsilon \quad \text{or} \quad  LS \quad \text{or} \quad sS, \quad \quad  L \to dSd'. 
\end{equation}
Here $S \to \epsilon$ is the $\epsilon$-production. The terminal symbols $d$, $d'$ represent pairs 
of vertices forming arcs and $s$ an unpaired vertex, respectively. The nonterminal symbol $S$
denotes an arbitrary structure and $L$ is an irreducible structure, i.e.~covered by an external
arc. Via the initial symbol $S$ this grammar is unambiguous, i.e.~each structure has a unique
derivation path.

We shall now extend the grammar to $\lambda$-structures. To this end we introduce
new nonterminals, $S_{(t_i)_i}^{(\ell_i)_i}$ and $L_{(t_i)_i}^{(\ell_i)_i}$. The nonterminal 
symbols stand for sets of substructures $N=\{S, L\}$, with labeled arcs $(\sigma_a)_a$
such that $\ell_h = \sum_{\sigma_a\in N} \sigma_a|_h$,
$1\le h \le r$. Aside from $s$ we have new terminals $\{d^{\sigma_a}, d'^{\sigma_a}\}$
where $d^{\sigma_a}, d'^{\sigma_a}$ denote $\sigma_a$-labelled arcs. We set $t_i=1$ if
and only if the corresponding nonterminal contains a transitional arc for $i$ and
$t_i=0$, otherwise.

By construction, the $\ell_h$ will evolve depending on the presence of transitional arcs,
that is $(\ell_h)_h$ may change in more than one coordinate if a transitional arc is
present, or in at most one coordinate, otherwise. $(t_i)_1^r$ can be viewed as an
indicator determining which production rule applies.

{\it Decomposition:} this means to decompose a nonterminal symbol $S^{(\ell_i)_i}_{( t_i)_i}$
into one left- and one right-symbol, together with consistent labelings.
A labeling for $S^{(\ell_i)_i}_{( t_i)_i} \to sS^{(\rho_i)_i}_{( p_i)_i}$ is consistent if
$\ell_i=\rho_i$ and $t_i=p_i$. Furthermore, consistency for 
$S^{(\ell_i)_i}_{( t_i)_i}\to L^{(\rho_i)_i}_{( p_i)_i} S^{(\xi_i)_i}_{( q_i )_i}$ means \\
(a) $t_i=1$ and $\rho_i>0$ implies $p_i=1$, \\
(b) $t_i=1$ and $\rho_i=0$ implies $\xi_i>0$ and $q_i=1$\\
(c) $t_i=0$                implies $p_i=q_i=t_i=0$.\\
This is a consequence of transitional arcs being induced by transitional vertices by means
of Poincar\' e duality and the latter being minimal in the boundary component. As a result
a transitional arc needs to be placed in the left-nonterminal as long as the latter contains
any arcs $a$ with the property $\sigma_a|_i=1$.

As a result we have the production rule:
\begin{equation}  \label{E:rule_can}
S^{(\ell_i)_i}_{( t_i)_i} \to \epsilon \quad \text{or} \quad L^{(\rho_i)_i}_{( p_i)_i} 
S^{(\xi_i)_i}_{( q_i )_i} 
\quad \text{or} \quad  sS^{(\ell_i)_i}_{( t_i)_i}, 
\end{equation}
satisfying $\rho_i+\xi_i = \ell_i$, for $\rho_i$, $\xi_i\ge 0$, $1\le i \le r$. 
In case of $t_i=1$, if $\rho_i>0$, we have $p_i=1$ and $q_i=0$. Otherwise, if $\rho_i=0$,
the $i$th trans-arc is distributed to the right and we have $p_i=0$ and $q_i=1$. Finally,
if $t_i=0$, both, $p_i=q_i=0$, see Fig.~\ref{F:grammar1}.


{\it Arc-removal:} this means to act on the nonterminal $L_{(t_i)_i}^{(\ell_i)_i}$ by
removing its outer arc, $a$. The key point here is to provide this arc with a label,
$\sigma_a$ consistent with the label associated with the generated nonterminals.
Given $L_{(t_i)_i}^{(\ell_i)_i}$ any arc-removal based production rule
generates the nonterminal $S_{(t'_i)_i}^{(\ell'_i)_i}$, where
$t_i'\le t_i$ and $\ell_i-\ell'_i\le 1$. This implies for the label of the removed arc
$\sigma_a|_h=(\ell_h-\ell'_h)$, $1\le h \le r$. In case $a$ is a trans-arc for $s_1,\dots s_k$, 
we have $0=t'_{s_j}<t_{s_j}=1$, or $0=t'_j=t_j$, otherwise. 
Accordingly, we have:
\begin{equation} \label{E:rule_rem}
L^{(\ell_i)_i}_{( t_i)_i} \to d^{\sigma_a}S^{(\ell_i)_i-\sigma_a}_{( p_i)_i}d'^{\sigma_a}. 
\end{equation}

The terminal symbols $\Sigma=\{ d^{\sigma_j}, d'^{\sigma_j}, s \}$, nonterminal symbols 
$V \ = \{ S_{(t_i)_i}^{(\ell_i)_i}, L_{(t_i)_i}^{(\ell_i)_i} \}$, production rules $R$ given by 
eq.~[\ref{E:rule_can}], eq.~[\ref{E:rule_rem}], together with the initial condition
$I=S^{(\ell_i)_i}_{(t_i)_i}$ with $\sum_i^r \ell_i = 2g+r$ and $t_i=1$,
for $1\le i \le r$, define the context-free grammar $G=(V,\Sigma, R, I)$. 


\begin{proposition}
  The context-free grammar $G$ discussed above generates all $\lambda$-structures uniquely, or
  equivalently, all pk-structures together with a blueprint.
\end{proposition}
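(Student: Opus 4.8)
The plan is to prove the proposition in two directions, establishing that the grammar $G$ generates exactly the set of $\lambda$-structures, each with a unique derivation. By Proposition~\ref{P:bi}, $\lambda$-structures are in bijection with pairs $(\mathfrak{m}_g,\mathfrak{p})$, so it suffices to work entirely on the level of $\lambda$-structures. The overall strategy mirrors the classical unambiguity proof for the secondary-structure grammar in eq.~[\ref{E:sec_grammar}]: I would show that every $\lambda$-structure admits a unique decomposition path, and conversely that every valid derivation terminates in a legal $\lambda$-structure.

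First I would establish \emph{soundness}, i.e.~that every string derivable from the initial symbol $I=S^{(\ell_i)_i}_{(t_i)_i}$ with $\sum_i \ell_i = 2g+r$ and all $t_i=1$ is a $\lambda$-structure. The key invariant to maintain along any derivation is the bookkeeping identity $\ell_h = \sum_{\sigma_a \in N}\sigma_a|_h$ together with the parity/transitionality constraints from the $\lambda$-tree definition. I would argue by induction on the length of the derivation that the consistency conditions (a)--(c) on the decomposition rule eq.~[\ref{E:rule_can}] and the arc-removal rule eq.~[\ref{E:rule_rem}] precisely enforce that transitional arcs are placed correctly: condition (a)--(b) guarantees a trans-arc for index $i$ is pushed into the leftmost nonterminal still containing an arc $a$ with $\sigma_a|_i=1$, which is exactly the $\gamma$-minimality requirement in the third bullet of the $\lambda$-tree definition. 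Summing the $\ell_i$ across the decomposition respects $\rho_i+\xi_i=\ell_i$, so the global count $\sum_{v,h}\sigma_v|_h = 2g+r$ from Lemma~\ref{L:welldefined} is preserved, and the odd-parity condition survives because transitional arcs carry the single ``surviving'' $1$ through each slicing level.

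Next I would establish \emph{completeness and unambiguity} together: given an arbitrary $\lambda$-structure $S^{(\sigma_a)_a}$, I would show there is exactly one derivation producing it. The outermost structure is parsed by eq.~[\ref{E:rule_can}]: reading left to right, the first vertex is either unpaired (forcing the $sS$ production) or the left endpoint of an arc (forcing the $LS$ production), so at each step exactly one alternative of eq.~[\ref{E:rule_can}] applies, just as in the secondary-structure case. The only genuinely new content is that the label bookkeeping $(\ell_i)_i,(t_i)_i$ must also be uniquely determined; here I would invoke that the arc labels $\sigma_a$ are fixed data of the $\lambda$-structure, so the split $\rho_i,\xi_i$ of $\ell_i$ between the $L$- and $S$-nonterminals is forced by how the actual labeled arcs distribute between the irreducible block and the remainder, and the transitionality flags $t_i$ are then determined by conditions (a)--(c). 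The arc-removal rule eq.~[\ref{E:rule_rem}] is deterministic once $L^{(\ell_i)_i}_{(t_i)_i}$ is reached, since $\sigma_a$ is read off as $\sigma_a|_h=\ell_h-\ell'_h$.

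The main obstacle I anticipate is verifying that the local consistency conditions genuinely capture the \emph{global} $\gamma$-minimality constraint on transitional vertices/arcs. Locally the rules only compare a nonterminal against its immediate children, whereas the $\lambda$-tree definition demands that a transitional vertex be minimal in the boundary component $\gamma$ \emph{among all} vertices carrying a $1$ at the relevant coordinate. The crux is therefore a lemma asserting that the left-to-right decomposition order induced by eq.~[\ref{E:rule_can}] coincides with the $<_\gamma$ order on arcs, so that ``push the trans-arc as far left as possible while arcs with $\sigma_a|_i=1$ remain'' is equivalent to the $\gamma$-minimality condition. Establishing this correspondence between the recursive parse order and $<_\gamma$, and checking it is stable under the backbone permutation $\rho(\mathfrak{p})$ appearing in the commutative diagram of Section~\ref{S:bi}, is where the real work lies; once it is in place, soundness and unambiguity follow by the inductive argument sketched above.
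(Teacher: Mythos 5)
Your proposal is correct and takes essentially the same approach as the paper's (much terser) proof: soundness because the consistency conditions (a)--(c) and the arc-removal labeling force the trans-arc of level $h$ to be emitted before any other arc with $\sigma_a|_h=1$, hence $\gamma$-minimal, and completeness/unambiguity because, exactly as for the classical secondary-structure grammar, the leftmost vertex forces which production applies while the given arc labels force the bookkeeping $(\ell_i)_i,(t_i)_i$. The ``crux lemma'' you isolate---that the left-to-right parse order on arcs agrees with $<_\gamma$---is precisely what the paper treats as immediate from Poincar\'{e} duality (via Fig.~\ref{F:vtoe}), and your additional worry about the backbone permutation $\rho(\mathfrak{p})$ is superfluous, since the grammar acts on $\lambda$-structures only and the passage to pk-structures is already settled by Proposition~\ref{P:bi}.
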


{\it Proof:}
Consider a nonterminal symbol $S^{(\ell_i)_i}_{(t_i)_i}$ where the $\ell_i$ are positive 
odd integers, with $\sum_i^r \ell_i = 2g+r$ and $t_i=1$. Any nonterminal symbol  
is uniquely generated by $G$ since the product of each production rule is unique.
Furthermore, no labeled arc with $\sigma_a|_h=1$ can be further added to the structure
when $t_h$ is set to be $1$. This guarantees that the trans-arc of level $h$ is  
minimal among those arcs with $\sigma_a|_h=1$. Thus the structures generated from  
$S^{(\ell_i)_i}_{(t_i)_i}$ are $\lambda$-structures.
As for its generative power, an arbitrary a $\lambda$-structure, is translated via $G$ into 
a unique sequence of terminal symbols, whence the proposition.


The grammar $G$ for $\lambda$-structure is designed based on the simple grammar 
for secondary structure by assigning labels to symbols. However, the latter evaluates
arcs against unpaired vertices, lacking the capability of differentiating base pairs
and unpaired bases in the context of specific loop types.

We accordingly present a grammar for $\lambda$-structures, {\it RNAFeatures$^*$} used in practice that
differentiates loop types, that is an extension of the grammar {\it RNAFeatures} \cite{Bio:method}.
The grammar allows us to differentiate 
loop-types. First we have the terminal symbols $\Sigma = \{A, C, G, U, R\}$ representing the respective
RNA nucleotides and the endpoint of the rainbow. The nonterminal symbols are 
\begin{equation*}
\begin{split}
V = \{ \lambda_g\_structure,  
ExteriorLoop^{(\ell_i)_i}_{( t_i)_i}, Stack^{(\ell_i)_i}_{( t_i)_i}, Weak^{(\ell_i)_i}_{( t_i)_i}, \\
 HairpinLoop^{\sigma_a}_{(0)_i}, InteriorLoop^{(\ell_i)_i}_{( t_i)_i}, BulgeLeft^{(\ell_i)_i}_{( t_i)_i}, \\
BulgeRight^{(\ell_i)_i}_{( t_i)_i}, MultiLoop^{(\ell_i)_i}_{( t_i)_i},  MLComponent^{(\ell_i)_i}_{( t_i)_i}, \\
MLComponents^{(\ell_i)_i}_{( t_i)_i}, SingleStrand \}. 
\end{split}
\end{equation*}

The nonterminal symbols are referenced via both: their semantics and their arc-label information
of the class of structures it represents.
$\lambda_g$-structure are $\lambda$-structures corresponding to a pk-structure of genus $g$, together 
with a blueprint.
For nonterminal symbols the $N^{(\ell_i)}_{((t_i)_i)}$, where 
$$
N=\{ExteriorLoop, Stack, \ldots, MLComponents\}, 
$$ 
the index $(\ell_i)$ represents the sum of the arc-labels contained in the class,
such that $t_i=1$ if and only if the transitional arc of level $i$ is present.
We distinguish the following classes: 
\begin{itemize} 
\item $ExteriorLoop$ denotes the class that is immediately nested in the rainbow, 
\item $Stack$ denotes the class covered by parallel arcs, while $Weak$ represents the class covered by a isolated arc, 
\item $HairpinLoop$ represents the class containing no arcs, immediately nested in $Weak$,
      while $InteriorLoop$, $BulgeLeft$, $BulgeRight$ have one and $MultiLoop$ have more than one such arc. 
     The classes $InteriorLoop$, $BulgeLeft$ and $BulgeRight$ differ in the number of unpaired 
     bases in the two intervals formed between the outer arc and the immediately nested arc.
     $BulgeLeft$ has the interval to the right empty and the intervals to the left non-empty,
     $BulgeRight$ is defined analogously and in $InteriorLoop$ both these intervals are non-empty, 
\item $MLComponents$ and $MLComponent$ are classes contained in a multi-loop: $MLComponent$ has 
exactly one braching and $MLComponents$ has at least two, 
\item $SingleStrand$ denotes the class consisting of a sequence of unpaired bases. 
\end{itemize}

The nonterminal symbols are listed in Fig.~\ref{F:nonterminals}

\begin{itemize}
\item $ini_r$: here we have $\sum_{0<i\le r} \ell_i+\sigma_a = 2g+r$. In case of $\sigma_a \neq (0)_i$,
the rainbow is always a transitional arc, hence $t_i = 0$ if $\sigma_a|_i \neq 0$,  
\item $exl_1$ emits the signal $\epsilon$ of the $\lambda$-structure,
\item $exl_2(x, (\ell_i)_i, (t_i)_i)$ cuts off the leftmost unpaired base and keeps all the labels as well as transitional 
arcs in the right nonterminal, where $x\in \{A,C,G,U\}$.  $co_1(x, (\ell_i)_i, ( t_i)_i)$, 
$ss_1(x)$ and $ss_2(x)$ are defined analogously,
\item $exl_3((\ell_i)_i, (\rho_i)_i, ( t_i)_i)$ decompose into two classes, distributing any labeled arcs
and in particular any trans-arcs, whence $(\ell_i)_i =  (\rho_i)_i + (\xi_i)_i$. 
In case of $t_i=1$, if $\rho_i>0$, we have $p_i=1$ and $q_i=0$. Otherwise, if $\rho_i=0$,
the $i$th trans-arc is distributed to the RHS and we have $p_i=0$ and $q_i=1$. Finally,
if $t_i=0$, both, $p_i=q_i=0$.  The rules $cs_1((\ell_i)_i, (\rho_i)_i, (t_i)_i)$,
$cs_2((\ell_i)_i, (\rho_i)_i, (t_i)_i)$ and $cs_3((\ell_i)_i, (\rho_i)_i, (t_i)_i)$ are defined analogously, 
\item $st_1(x,y,(\ell_i)_i, (t_i)_i, \sigma_1)$ remove a base pair $(x,y)$ with the arc label $\sigma_a$,
$x,y\in \{A,C,G,U\}$, inducing the labels $(\ell_i)_i-\sigma_a$ for the class produced.
If $\sigma_a|_i\neq 0$ and $t_i=1$, the arc $(x^{\sigma_a}, y^{\sigma_a})$ is a trans-arc of level 
$i$ and we set $p_i=0$ accordingly. The rules $st_2(x,y,(\ell_i)_i, (t_i)_i, \sigma_a)$, $hl(x,y, \sigma_a)$,
$il(x,y,(\ell_i)_i, (t_i)_i, \sigma_a)$, $bl(x,y,(\ell_i)_i, (t_i)_i, \sigma_a)$, $br(x,y,(\ell_i)_i, (t_i)_i,
\sigma_a)$ as well as $ml(x,y,(\ell_i)_i, (t_i)_i, \sigma_a)$, $x,y\in \Sigma$ are defined analogously.  
\end{itemize}

We display the production rules in Fig.~\ref{F:rules}. 

Let $S$ be a structure and $\theta$ be an RNA sequence. A CFG allows to assign a score to a
pair $(S,\theta)$, derived from scores assigned to the production rules encountered. Namely,
for each $(S,\theta)$ the CFG provides a unique parse tree and the score of $(S, \theta)$ is
then computed by means of the scores of the respective production rules. Scores
based on a thermodynamic energy model produces the minimum free energy (mfe) model 
\cite{Mathews:99, Mathews:04, Turner:10}
and a probabilistic scoring scheme derived from data bases of RNA sequences and structures
leads to a SCFG $G=(V,\Sigma, R, S, P)$. In the latter case, sequence--structure pairs of the
data set allow to assign scores to the production rules, which in turn induce a score of an
arbitrary pair $(S,\theta)$.

\section{The SCFG of RNA pseudoknot structures}

Stochastic context-free grammars (SCFG) were developed for RNA secondary structures
\cite{Eddy:94, Sakakibara:94}. They have been employed for the {\it ab initio} structure
prediction as well as the analysis of observables such as base pairing probabilities or
loop patterns \cite{Knudsen:03, Eddy:00, Eddy:01a}. A SCFG is derived from a CFG by associating
to each production rule $r$ a probability $p_r$. That is, for any nonterminal $A$, suppose that
$r_1, \ldots r_k$ are all derivations of $A$, then we have $\sum_i p_i = 1$.

Let $S_g$ denotes an RNA structure of genus $g$ and $\theta$ be a sequence. For $g=0$, $S_{0}$
is a secondary structure and can be parsed by {\it RNAFeatures}. We construct the parse tree of
$(S_{0},\theta)$ and record the frequency of the production rules.
In case of $g>0$, for any $S_g$-blueprint, $\mathfrak{p}(S_g)$, we consider
the correspondence
\begin{equation} \label{E:bi}
(S_g,\mathfrak{p}(S_g),\theta) \to (S^{(\sigma_a)_a}_{\mathfrak{p}(S_g)}, 
\theta_{\mathfrak{p}(S_g)}), 
\end{equation}
where $S^{(\sigma_a)_a}_{\mathfrak{p}(S_g)}$ is a $\lambda$-structure and
$\theta_{\mathfrak{p}(S_g)}$ is the modified sequence. We construct the parse tree
of {\it RNAFeatures}$^*$
of $(S^{(\sigma_a)_a}_{\mathfrak{p}(S_g)}, \theta_{\mathfrak{p}(S_g)})$ and record the
frequency of the encountered production rules.

If there are $m$ $S_g$-blueprints, the contribution of each pair
$(S^{(\sigma_a)_a}_{\mathfrak{p}(S_g)}, \theta_{\mathfrak{p}(S_g)})$ is normalized by $1/m$
and the frequency of a production rule $R$ equals $f(R)/m$ where $f(R)$
is the frequency of $R$ being applied. After processing all structures and sequences,
the probability of a production rule is computed as the quotient of its frequency
over the sum of all frequencies of its respective derivation. 

We next consider $\mathbb{P}(S_g, \theta | M)$, the normalized score of an arbitrary
structure, $S_g$, over an arbitrary sequence, $\theta$, as a function of the data set
$M$. We set $\mathbb{P}(S_g,\theta | M) = \sum_{\mathfrak{p}(S_g)} \mathbb{P}
(S^{(\sigma_a)_a}_{\mathfrak{p}(S_g)}, \theta_{\mathfrak{p}(S_g)} | M)$, 
where $\mathbb{P}(S^{(\sigma_a)_a}_{\mathfrak{p}(S_g)},\theta_{\mathfrak{p}(S_g)} | M)$
is calculated by the multiplying the probabilities of the production rules in the
parse tree of $(S^{(\sigma_a)_a}_{\mathfrak{p}(S_g)}, \theta_{\mathfrak{p}(S_g)})$ and the sum
is taken over the set of all $S_g$-blueprints.

The $\mathbb{P}(S_g, \theta | M)$ induce 
$\mathbb{P}(S_g | M) = \sum_{\theta\in \mathcal{Q}^n_4} \mathbb{P}(S_g, \theta |M)$.
Using the fact that for fixed blueprint, $\mathfrak{p}$, we have a bijection
$\beta_\mathfrak{p}\colon Q_4^n\to Q_4^n$, we derive
\begin{eqnarray*}
\mathbb{P}(S_g | M) &=& \sum_{\theta_{\mathfrak{p}(S_g)} \in \mathcal{Q}^n_4}  \sum_{\mathfrak{p}(S_g)} 
\mathbb{P}(S_{\mathfrak{p}(S_g)}^{(\sigma_a)_a},
\theta_{\mathfrak{p}(S_g)} |M)  \\
& = & \sum_{\mathfrak{p}(S_g)} \mathbb{P}(S_{\mathfrak{p}(S_g)}^{(\sigma_a)_a}|M). 
\end{eqnarray*}

Furthermore the SCFG facilitates the sampling of structures $S_g \in \mathbb{S}_{g, n}$
with Boltzmann probability, i.e., $\mathbb{P}(S_g) = \mathbb{P}(S_g | M) /
\sum_{S \in \mathbb{S}_{g, n}} \mathbb{P}(S|M)$.
We employ {\it RNAFeatures$^*$} to Boltzmann sample a $\lambda$-structure $S^*$, and
reconstruct $S_g$ as well as the associated blueprint, $\mathfrak{p}(S_g)$: fixing $S_g$,
suppose there are $m$ blueprints $\mathfrak{p}_i(S_g)$, these correspond to the $m$ distinct
$\lambda$-structures $S^*_i$. Let $\mathbb{S}_n^*$ be the collection of all such $S^*_i$.
In view of  $S_i^* \to (S_g, \mathfrak{p}_i(S_g))$ being a bijection, we have 
$$
\sum_{S_g \in \mathbb{S}_{g,n}} \mathbb{P}(S_g |M) 
=  \sum_{S_g \in \mathbb{S}_{g,n}} \sum_{\mathfrak{p}(S_g)} \mathbb{P}(S_{\mathfrak{p}(S_g)}^{(\sigma_a)_a}|M)
= \sum_{S^* \in \mathbb{S}^*_n} \mathbb{P}(S^* | M)   
$$
Therefore, in view of $\mathbb{P}(S_g) = \sum_{\mathfrak{p}_i(S_g)} \mathbb{P}(S_i^*)$ and
$$
\sum_{\mathfrak{p}_i(S_g)} \mathbb{P}(S_i^*) = 
\frac{\sum_{\mathfrak{p}_i(S_g)} \mathbb{P}
(S^*_i | M)} {\sum_{S^* \in \mathbb{S}^*_n} \mathbb{P}(S^* |M)} = 
  \frac{\mathbb{P}(S_g | M)}
{\sum_{S_g \in \mathbb{S}_{g,n}} \mathbb{P}(S_g | M)}
$$
we have $\mathbb{P}(S_g) = \sum_{\mathfrak{p}_i(S_g)} \mathbb{P}
(S^*_i | M)/ \sum_{S^* \in \mathbb{S}^*_n} \mathbb{P}(S^* |M)$.

As for runtime complexity, the Boltzmann sampler of secondary structures using such a SCFG
has runtime of $O(n^2)$, if structures are ranked according to the sequential order. This can
be improved to $O(n\log n)$ by ranking the structures with the Boustrophedon order
\cite{Ponty:08,Nebel:11}. For fixed topological genus, the label enhancement of {\it RNAFeatures}$^*$
by construction increases only the number of derivations. Thus the Boltzmann sampler for
$\lambda$-structures has runtime complexity $O(n\log n)$.

\section{Discussion}


RNA secondary structures have been widely studied using the classic CFG implied by the recursions in
\cite{Waterman:78aa, Waterman:78a}. 
Using RNA sequence and structure data sets, this CFG induces a SFCG. This SCFG has been applied for the
{\it ab initio} prediction of RNA secondary structures \cite{Knudsen:99, Knudsen:03}, calculating the
likelihood of a sequence generated by the model SCFG \cite{Eddy:00}, and detecting RNA genes using
comparative sequence analysis \cite{Eddy:01a}. 
Furthermore, the generation of random secondary structures according to the native distributions of 
certain families of RNA structures is studied in \cite{Nebel:11} in order to identify key structural
motifs in biological RNA secondary structures. This paper allows to extend these studies to
pk-structures. To the best of our knowledge {\it RNAFeatures}$^*$ is the first unambiguous
context-free grammar for pk-structures. Related work has been done by Freiermuth and Nebel on a
CFG of certain, colored trees. These are trees that can be mapped to unicellular maps, however, not
bijectively.

A particular, multiple context-free grammar (MCFG) for pk-structures is introduced in
\cite{Rivas:99}. This grammar employs a vector of nonterminal symbols referencing a
substructure with a gap.
The time and space complexity for Boltzmann sampling pk-structures with this grammar are
$O(n^4)$ and $O(n^2)$, respectively.
The grammar furthermore implies a folding algorithm having relatively high computational
cost \cite{Rivas:99}: $O(n^6)$ time and $O(n^4)$ space complexity, using the thermodynamic model.
Restricting to certain types of pk-structures the time complexity can be reduced to $O(n^4)$ time
complexity \cite{Dirks:03}. In any case, specifying the output space of the MCFG, i.e.~what types
of structures the grammar actually generates is subtle, see \cite{Rivas:00} for details.

The passage from MCFG to CFG has implications for time and space complexity:
an MCFG contains rules of the type $A\to B_1C_1B_2C_2$ where
$\langle B_1, B_2\rangle$, $\langle C_1, C_2\rangle$ are vectors of nonterminals.
It requires a minimum of three loop-indices to execute these rules in the sampling or folding
process. 
In contrast, the production rule in any CFG being in Chomsky normal form can be written as
$A\to BC$, which requires only one loop index, reducing the time complexity by $O(n^2)$.
As for space complexity this passage obsoletes using vectors, reducing the space complexity
for $O(n)$ and $O(n^2)$ for sampling and folding, respectively.

The key idea facilitating this passage lies in mapping the information of
crossing arcs into the labels of noncrossing arcs, i.e.~passing from pk-structures to
$\lambda$-structures.
Accordingly, the bijection of eq.~\ref{E:bi} is the crucial point. The fact that the bijection
is genuinely topology-based allows {\it RNAFeatures}$^*$ to extend the classic CFG of
secondary structures, generating any pk-structure.

In the following we present the expectation and variance of some important parameters 
related to the structural features of pk-structures generated by the induced SCFG
of {\it RNAFeatures}$^*$. These parameters are also used to analyze Boltzmann sampled
secondary structures in comparison with native structures in \cite{Nebel:11}.  
Our SCFG is trained via $90$ single stranded tRNA sequence-structure
pairs, taken from the Nucleic Acid Database (NDB) \cite{NDB:1, NDB:2}, considering only Watson-Crick and wobble
base pairs. In this training set, we find $16$ structures having genus $0$, $73$ of genus $1$ and $1$
structure of genus $2$. The average sequence length is $75.21$ nucleotides.
We analyze the following five random variables:
$bp$, the number of base pairs,
$st_n$, the number of stacks,
$st_{\ell}$, the average length of a stack,
$hp_n$, the number of hairpin and finally
$hp_{\ell}$, the average length of a hairpin loop.
Here a stack of length $k$ is the maximal sequence of parallel arcs $\{(i,j), (i+1, j-1), \ldots, (i+k-1, j-k+1)\}$.
A hairpin loop of length $j-i-1$ is a substructure, which consists of a base pair $(i,j)$ and
the unpaired bases $i+1, \ldots, j-1$.

The statistics is obtained by Boltzmann sampling of $10^5$ structures over $76$ nucleotides via the
SCFG. For reference purposes we furthermore sample $10^5$ random structures over $76$ nucleotides, having
fixed genus $0$ and $1$, respectively, with uniform probability \cite{Huang:13}. We summarize in
Table~\ref{T:stat} the respective means and variances of the above five random variables.

The sampling generates $264$, $9610$, $126$ structures having genus $0$, $1$ and $2$, reflecting the
large quantity of genus $1$ structures in the training set. By construction our SCFG does not recognize
contributions of unpaired bases in hairpin loops, bulges and interior loops, which explains the systematic
deviation in the lengths of hairpin-loops. Table~\ref{T:stat} shows that the Boltzmann sampled
structures are not random structures of either genus zero or one.

The implication of the CFG to the folding of pk-structures is work in progress.
Though we can fold a $\lambda$-structure over a given sequence $\theta$ in $O(n^3)$ time,
the bijection producing the corresponding pseudoknotted structure does systematically alter
the underlying sequence. In particular, in case of pk-structures of genus one, this consists
in the transposition of two subsequent intervals and is thus manageable.

\section{ Acknowledgments.}
We gratefully acknowledge the help of Kevin Shinpaugh and the computational support team at VBI.
Many thanks to Christopher L.~Barrett and Henning Mortveit for discussions.

\bibliographystyle{elsarticle-num}

\newpage

\begin{figure}[t]
\includegraphics[width=0.9\columnwidth]{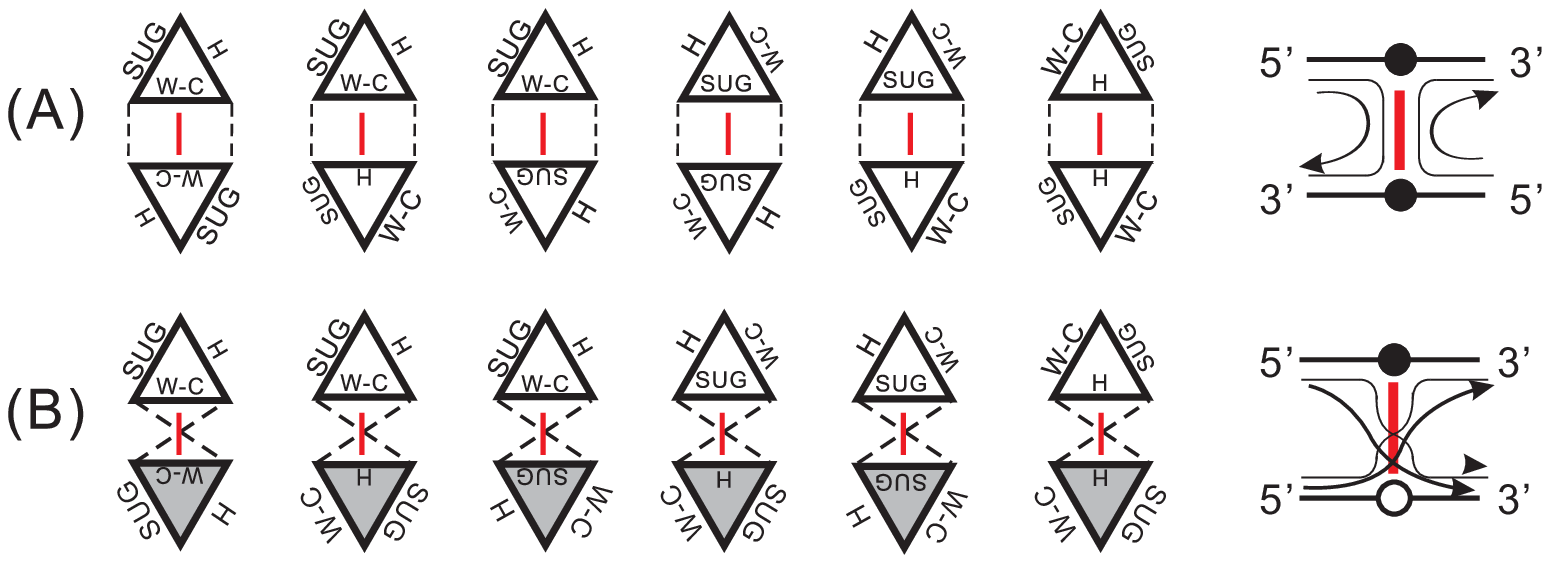}
\caption{\small 
Natural classification of RNA base pairs \cite{Leontis:01} by faces of simplices. (A) Cis-base pairs formed by 
parallel triangles inducing untwisted ribbons. (B) Trans-base pairs formed by anti-parallel
triangles inducing twisted ribbons. 
}
\label{F:basepair}
\end{figure}

\begin{figure}[t]
\includegraphics[width=\columnwidth]{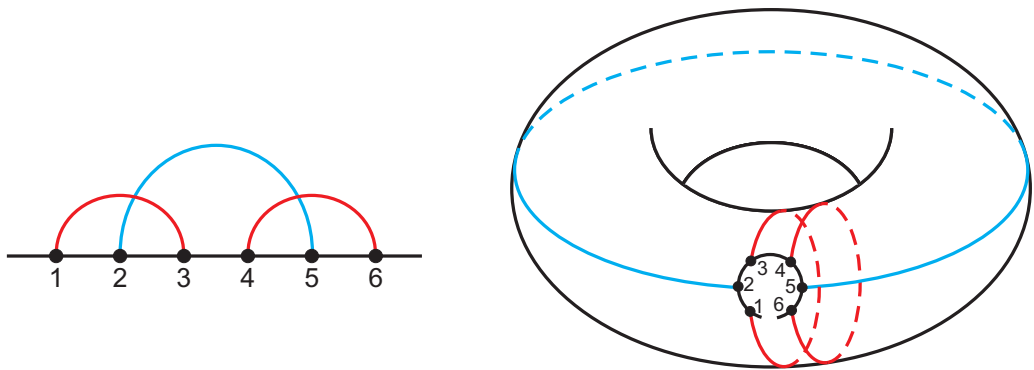}
\caption{\small 
  Crossings depend on the surface the structure is drawn on:
  here the diagram exhibits crossing arcs when drawn in the plane,
  on the torus however, the diagram can be drawn cross-free. }
\label{F:pse}
\end{figure}

\begin{figure}[t]
\begin{center}
\includegraphics[width=0.9\columnwidth]{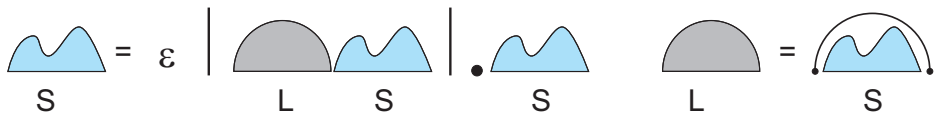}
\end{center}
\caption{\small Illustration of the two production rules of the CFG of RNA secondary structures:
  decomposition (left) and arc removal (right).
}
\label{F:grammar}
\end{figure}

\begin{figure}[t]
  \includegraphics[width=\columnwidth]{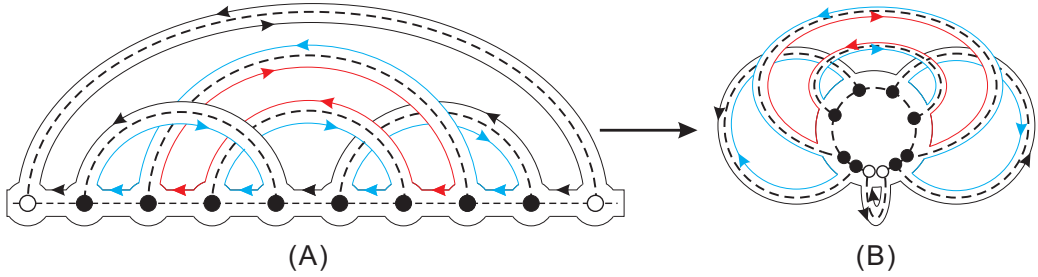}
\caption{\small (A): Inflation of edges and vertices into ribbons and disks.
  Here we have four boundary components traversed in counter-clockwise
  orientation traversing the sides of any ribbon in opposite
  directions. (B): Collapsing the backbone of (A) into a single disc does
  not affect genus. Here we have $g=1$.
} \label{F:fat}
\end{figure}

\begin{figure}[t]
\begin{center}
  \includegraphics[width=\columnwidth]{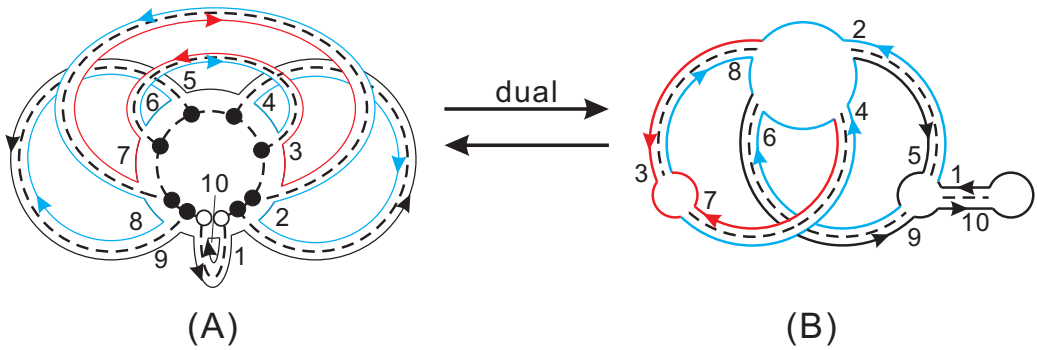}
\end{center}
\caption{\small (A): a fatgraph represented by permutations. We have $\sigma=(1,2,3,4,5,,6,7,8,9,10)$,
  the unique vertex obtained by collapsing the backbone, $\alpha = (1,10)(2,5)(3,8)(4,7)(6,9)$ the
  fixed-point free involution representing the ribbons, and $\gamma=(1,5,9)(2,8,6,4)(3,7)(10)$. 
(B): the Poincar\'{e} dual interchanging boundary components by vertices, hence producing a unicellular map. 
}
 \label{F:dual}
\end{figure}

\begin{figure}[t]
\includegraphics[width=\columnwidth]{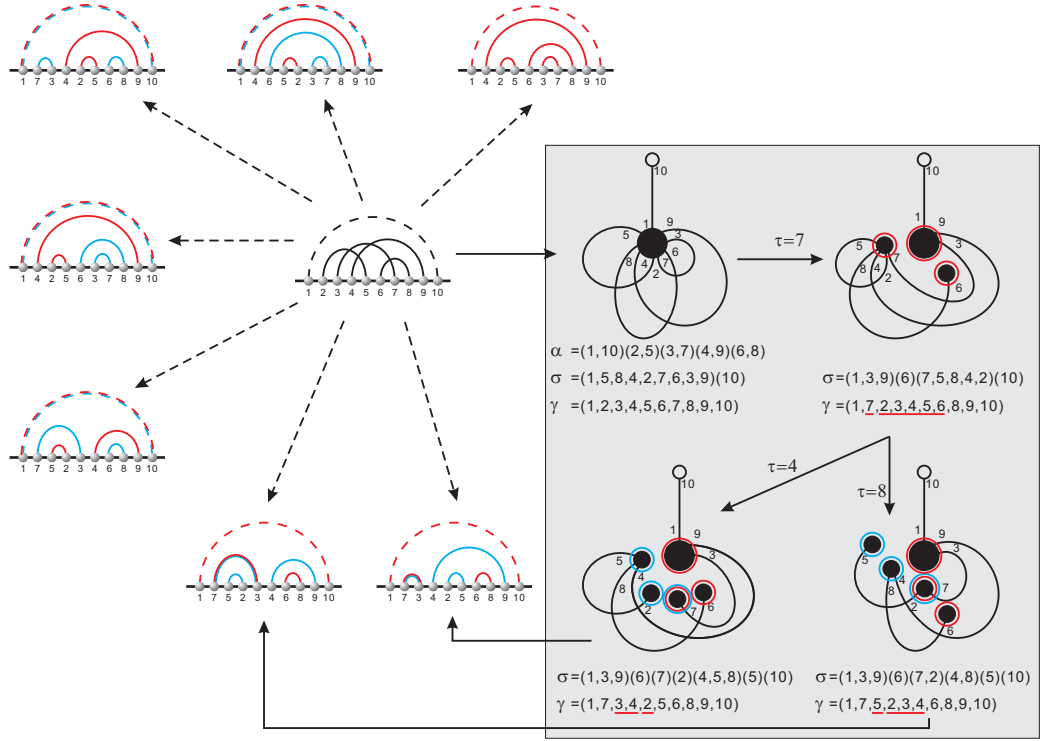}
\caption{\small All blueprints of a matching of genus $2$ as 
  well as all of its associated $\lambda$-matchings induced by all
  possible blueprints. In the insert, we show how the bottom two
  $\lambda$-matchings are induced:
  red circles reference the first
  and blue circles the second slicing, respectively.  
}
\label{F:path}
\end{figure}

\begin{figure*}[t]
\begin{center}
\includegraphics[width=\columnwidth]{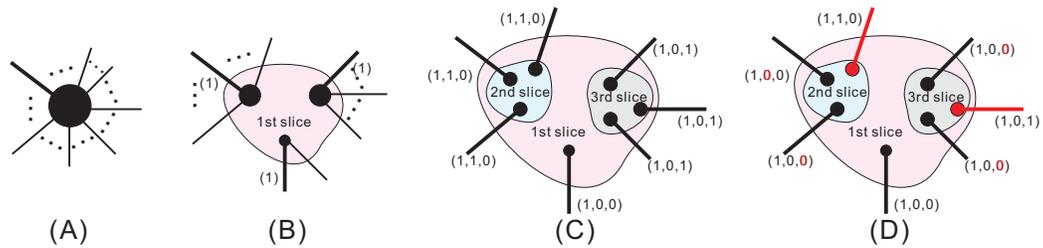}
\end{center}
\caption{ The idea of the proof of Lemma~\ref{L:welldefined}: the labels record the
  successive slicings and are finally reduced: only transitional vertices carry
  the information about previous slicing events.
}
\label{F:remove_label}
\end{figure*}

\begin{figure*}[ht]
\begin{center}
  \includegraphics[width=\columnwidth]{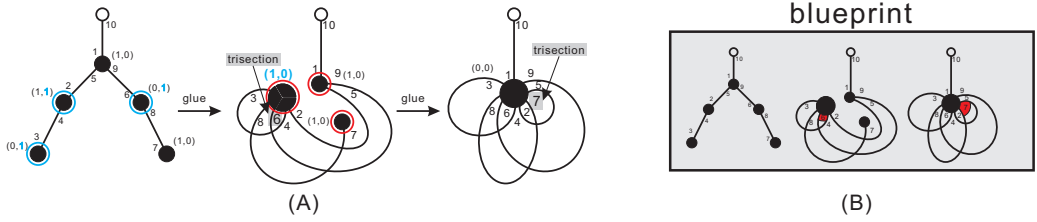}
\end{center}
\caption{Reconstructing the blueprint from a $\lambda$-tree: we consider the set of all vertices
  such that $\sigma_v|_2=1$ (blue circles) and glue and relabel the
  vertices as in the proof of Proposition\ref{P:bi}.
 This generates the unicellular map $\mathfrak{m}_1$ with the glued vertex
  labeled by $(1,0)$, carrying the distinguished trisection $8$.
  Iteration of this process, produces $\mathfrak{m}_2$ together with a distinguished
  trisection. We have thus constructed from the $\lambda$-tree the blueprint
  $((\mathfrak{m}_{2}, \tau), (\mathfrak{m}_{1}, \tau_1,), (\mathfrak{m}_0,\varnothing))$.
} \label{F:blueprint}
\end{figure*}

\begin{figure}[t]
  \includegraphics[width=0.9\columnwidth]{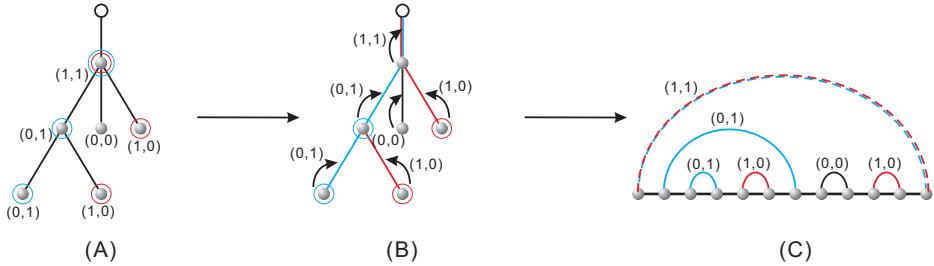}
\caption{\small From a $\lambda$-tree with labeled vertices to a $\lambda$-matching with labeled arcs.
  (A) a $\lambda$-tree with labeled vertices. (B) shifting the labels from vertices 
  to edges. (C) a $\lambda$-matching with labeled arcs obtained from (B) by taking the
  Poincar\'{e} dual.
}
\label{F:vtoe}
\end{figure}

\begin{figure}[t]
  \includegraphics[width=0.9\columnwidth]{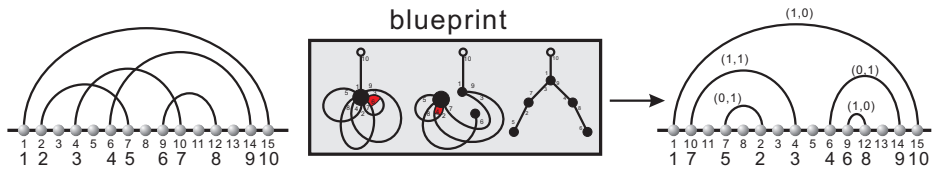}
\caption{\small A pk-structure together with a blueprint maps to a unique $\lambda$-structure. }
\label{F:comm1}
\end{figure}

\begin{figure}[t]
\includegraphics[width=\columnwidth]{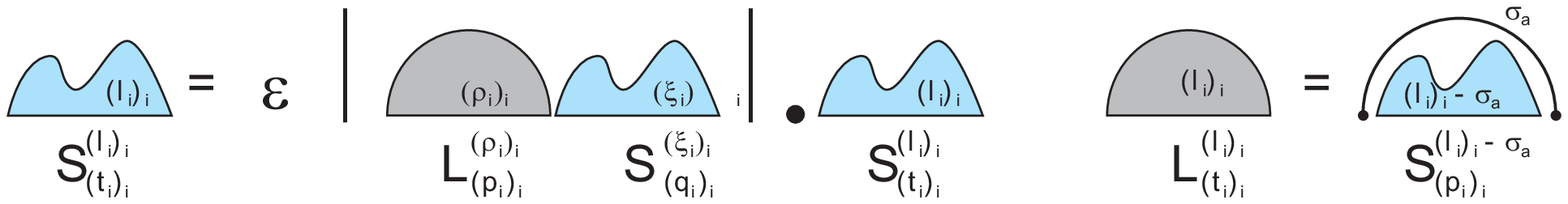}
\caption{\small Decomposition: any transitional arc is placed in the left-nonterminal as
  long as the latter contains any arcs $a$ with the property $\sigma_a|_i=1$.  
}
\label{F:grammar1}
\end{figure}

\begin{figure*}[t]
\begin{center}
\includegraphics[width=\columnwidth]{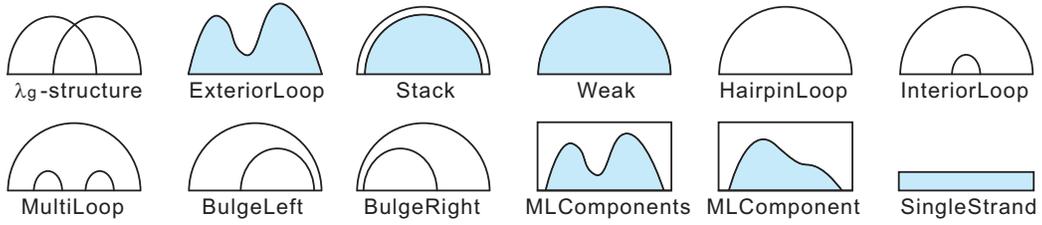}
\end{center}
\caption{ The nonterminal symbols of {\it RNAFeatures*}. }
\label{F:nonterminals}
\end{figure*}

\begin{figure*}[t]
\begin{center}
\includegraphics[width=0.9\columnwidth]{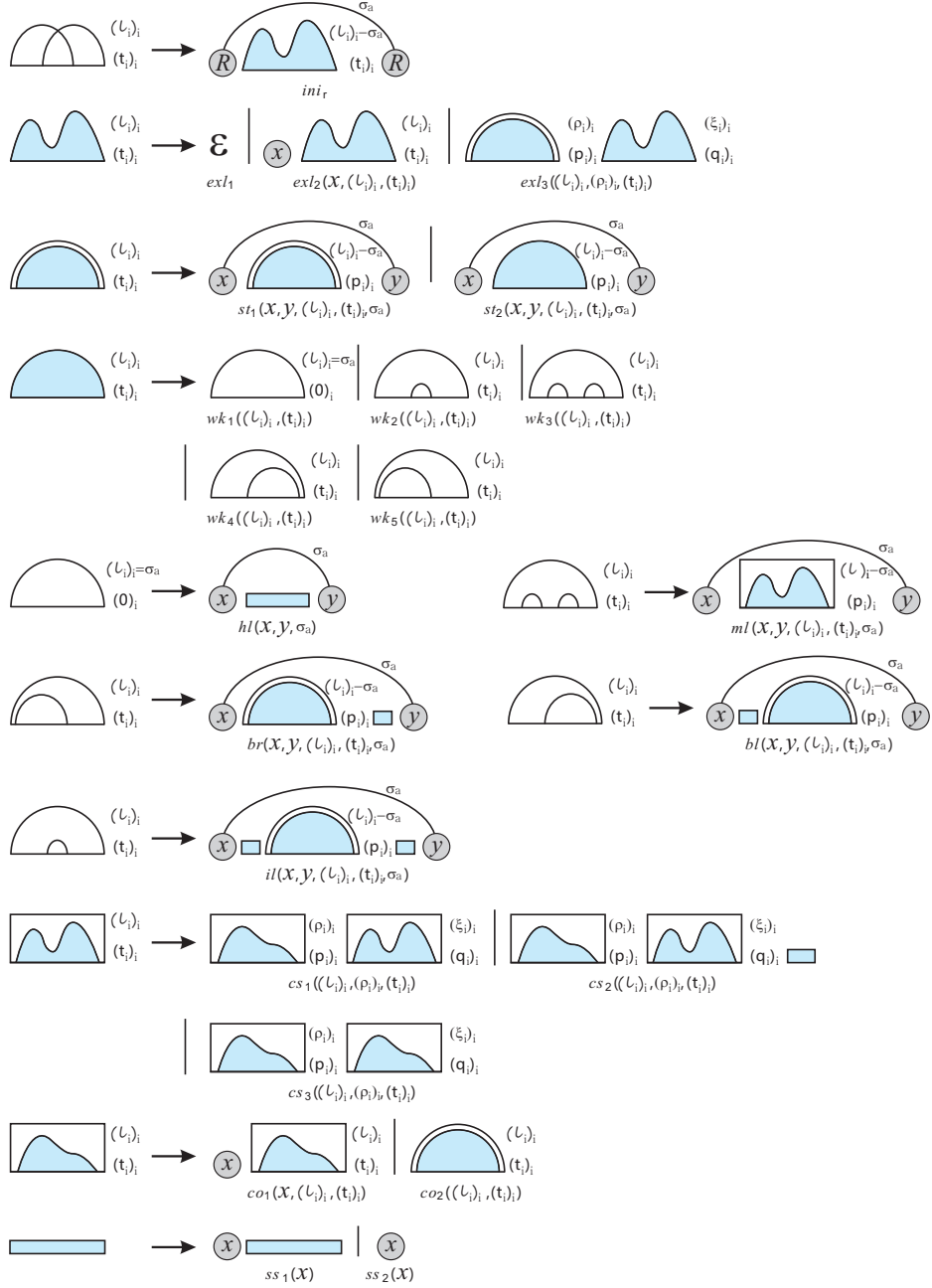}
\end{center}
\caption{ The production rules of {\it RNAFeatures*}. }
\label{F:rules}
\end{figure*}

\begin{table}[t]
\caption{\small Expectation and variance of important statistic variables related to the RNA structural motifs.}
\label{T:stat} 
\begin{tabular}{|c|c|c|c|c|c|}
\hline
& \small Input Set & \small Boltzmann & \small Uniform, $g=1$ & \small Uniform $g=0$ \\
$bp$ & $21.31$ & $21.06$ & $25.58$ & $25.05$ \\ \hline
$E(st_n)$ & $5.17$ & $5.36$ & $22.94$ & $22.25$ \\ \hline
$Var(st_n)$ & $0.56$ & $0.91$ & $3.98$ & $4.25$ \\ \hline
$E(st_\ell)$ & $3.77$ & $3.65$ &$1.03$ & $1.04$ \\ \hline
$Var(st_\ell)$ & $4.68$ &$ 9.92$ & $0.97$ & $0.97$ \\ \hline
$E(hp_n)$ & $1.29$ & $1.18$ & $11.81$ & $13.04$ \\ \hline
$Var(hp_n)$ & $0.50$ & $0.21$ & $4.27$ & $4.29$ \\ \hline
$E(hp_\ell)$ & $7.16$ & $5.55$ & $0.47$ & $0.51$ \\ \hline
$Var(hp_\ell)$ & $6.85$ & $13.38$  & $0.70$ & $0.74$ \\ \hline
\end{tabular}
\end{table}








\end{document}